\theoremstyle{plain}
\newtheorem{theorem}{Theorem}[section]
\newtheorem{lemma}[theorem]{Lemma}
\newtheorem{proposition}[theorem]{Proposition}
\theoremstyle{definition}
\theoremstyle{remark}
\newtheorem{remark}[theorem]{Remark}
\numberwithin{equation}{section}
\begin{document}
\title[On the integrability of the Ostrovsky-Vakhnenko equation]{On the
complete integrability of the Ostrovsky-Vakhnenko equation}
\author{Yarema A. Prykarpatskyy}
\address{Department of Applied Mathematics, University of Agriculture, ul.
Balicka 253c, 30-198 Krakow, Poland, and the Institute of Mathematics of NAS
of Ukraine, 3 Tereshchenkivska str., Kyiv, Ukraine}
\email{yarpry@gmail.com}
\subjclass{Primary 58A30, 56B05 Secondary 34B15 }
\keywords{Lax type integrability, Ostrovsky-Vakhnenko equation, symplectic
method, differential-algebraic approach,}
\date{present}

\begin{abstract}
The complete integrability of the Ostrovsky-Vakhnenko equation is studied by
means of symplectic gradient-holonomic and differential-algebraic tools. A
compatible pair of polynomial Poissonian structures, Lax type representation
and related infinite hierarchies of conservation laws are constructed.
\end{abstract}

\maketitle

\section{Introduction}

In 1998 V.O.~Vakhnenko investigated high-frequency perturbations in a
relaxing barothropic medium. He discovered that this phenomenon is described
by a new nonlinear evolution equation. Later it was proved that this
equation is equivalent to the reduced Ostrovsky equation \cite{Os}, which
describes long internal waves in a rotating ocean. The nonlinear
integro-differential Ostrovsky-Vakhnenko equation
\begin{equation}
u_{t}=-uu_{x}-D_{x}^{-1}u  \label{D1.1}
\end{equation}%
on the real axis $\mathbb{R}$ for a smooth function $u\in C^{(\infty )}(%
\mathbb{R}^{2};\mathbb{R)},$ where $D_{x}^{-1}$ is the inverse-differential
operator to $D_{x}:=\partial /\partial x,$ can be derived \cite{Va} as a
special case of the Whitham type equation
\begin{equation}
u_{t}=-uu_{x}+\int_{\mathbb{R}}K(x,y)u_{y}dy.  \label{D1.2}
\end{equation}%
Here the generalized kernel $K(x,y):=\frac{1}{2}|x-y|,x,y\in \mathbb{R}$ and$%
\ t\in \mathbb{R}$ is an evolution parameter. Different analytical
properties of equation (\ref{D1.1}) were analyzed in articles \cite%
{Os,Va,Par}, the corresponding Lax type integrability was stated in \cite%
{DHH}.

Recently by J.C. Brunelli and S. Sakovich in \cite{BS} there was
demonstrated that Ostrovsky-Vakhnenko equation is a suitable reduction of
the well known Camassa-Holm equation that made it possible to construct the
corresponding compatible Poisson structures for (\ref{D1.1}), but in a
complicated enough non-polynomial form.

In the present work we will reanalyze the integrability of equation (\ref%
{D1.1}) both from the gradient-holonomic \cite{PM,BPS,HPP}, symplectic and
formal differential-algebraic points of view. As a result, we will re-derive
the Lax type representation for the Ostrovsky-Vakhnenko equation (\ref{D1.1}%
), construct the related simple enough compatible polynomial Poisson
structures and an infinite hierarchy of conservation laws.

\section{Gradient-holonomic integrability analysis}

Consider the nonlinear Ostrovsky-Vakhnenko equation (\ref{D1.1}) as a
suitable nonlinear dynamical system
\begin{equation}
du/dt=-uu_{x}-D_{x}^{-1}u:=K[u]  \label{D2.1}
\end{equation}%
on the smooth $2\pi $-periodic functional manifold
\begin{equation}
M:=\{u\in C^{(\infty )}(\mathbb{R}/2\pi \mathbb{Z};\mathbb{R)}%
:\int_{0}^{2\pi }udx=0\},  \label{D2.2}
\end{equation}%
where $K:M\rightarrow T(M)$ is the corresponding well-defined smooth vector
field on $M.$

We, first, will state that the dynamical system (\ref{D2.1}) on manifold $M $
possesses an infinite hierarchy of conservation laws, that can signify as a
necessary condition for its integrability. For this we need to construct a
solution to the Lax gradient equation%
\begin{equation}
\varphi _{t}+K^{^{\prime ,\ast }}\varphi =0,  \label{D2.3}
\end{equation}%
in the special asymptotic form%
\begin{equation}
\varphi =\exp [-\lambda t+D_{x}^{-1}\sigma (x;\lambda )],  \label{D2.4}
\end{equation}%
where, by definition, a linear operator $K^{^{\prime ,\ast }}$\ $:T^{\ast
}(M)\rightarrow T^{\ast }(M)$ is, \ adjoint with respect to the standard
convolution $(\cdot ,\cdot )$ on $T^{\ast }(M)\times T(M),$ \bigskip the
Frechet-derivative of a nonlinear mapping $K:M\rightarrow T(M):$%
\begin{equation}
K^{^{\prime ,\ast }}=uD_{x}+D_{x}^{-1}  \label{D2.5}
\end{equation}%
and, respectively,
\begin{equation}
\sigma (x;\lambda )\simeq \sum_{j\in \mathbb{Z}_{+}}\sigma _{j}[u]\lambda
^{-j},  \label{D2.6}
\end{equation}%
as $|\lambda |\rightarrow \infty $ with some "local" functionals $\sigma
_{j}:M\rightarrow C^{(\infty )}(\mathbb{R}/2\pi \mathbb{Z};\mathbb{R)}$ \ on
$M$ for all $j\in \mathbb{Z}_{+}.$

By substituting (\ref{D2.4}) into (\ref{D2.3}) one easily obtains the
following recurrent sequence of functional relationships:%
\begin{equation}
\sigma _{j,t}+\sum_{k\leq j}\sigma _{j-k}(u\sigma _{k}+D_{x}^{-1}\sigma
_{k,t})-\sigma _{j+1}+(u\sigma _{j})_{x}+\delta _{j,0}=0  \label{D2.7}
\end{equation}%
for all $j+1\in \mathbb{Z}_{+}$ modulo the equation \ (\ref{D2.1}). By means
of standard calculations one obtains that this recurrent sequence is
solvable and
\begin{eqnarray}
\sigma _{0}[u] &=&0,\sigma _{1}[u]=1,\sigma _{2}[u]=u_{x},  \label{D2.8} \\
\sigma _{3}[u] &=&0,\sigma _{4}[u]=u_{t}+2uu_{x},  \notag \\
\sigma _{5}[u] &=&3/2(u^{2})_{xt}+u_{tt}+2/3(u^{3})_{xx}-u_{x}D_{x}^{-1}u
\notag
\end{eqnarray}%
and so on. It is easy check that all of functionals%
\begin{equation}
\gamma _{j}:=\int_{0}^{2\pi }\sigma _{j}[u]dx  \label{D2.9}
\end{equation}%
are on the manifold $M$ conservation laws, that is $d\gamma _{j}/dt=0$ for $%
j\in \mathbb{Z}_{+}$ with respect to the dynamical system \ (\ref{D2.1}). \
For instance, at $j=5$ one obtains:
\begin{eqnarray}
\gamma _{5} &:&=\int_{0}^{2\pi }\sigma _{5}[u]dx=\int_{0}^{2\pi }\left[
3/2(u^{2})_{xt}+u_{tt}+2/3(u^{3})_{xx}-u_{x}D_{x}^{-1}u\right] dx=
\label{D2.10} \\
&=&\frac{d^{2}}{dt^{2}}\int_{0}^{2\pi }u_{tt}dx-\int_{0}^{2\pi
}u_{x}D_{x}^{-1}udx=\frac{d^{2}}{dt^{2}}\int_{0}^{2\pi }udx-\left.
uD_{x}^{-1}u\right\vert _{0}^{2\pi }+\int_{0}^{2\pi }u^{2}dx=  \notag \\
&=&\int_{0}^{2\pi }u^{2}dx,  \notag
\end{eqnarray}%
and
\begin{eqnarray}
d\gamma _{5}/dt &=&2\int_{0}^{2\pi }uu_{t}dx=-2\int_{0}^{2\pi
}u(uu_{x}+D_{x}^{-1}u)dx=  \label{D2.10a} \\
&=&-2\int_{0}^{2\pi }uD_{x}^{-1}udx=-\int_{0}^{2\pi
}[(D_{x}^{-1}u)^{2}]_{x}dx=\left. (D_{x}^{-1}u)^{2}\right\vert _{0}^{2\pi
}=0,  \notag
\end{eqnarray}%
since owing to the constraint (\ref{D2.2}) the integrals $\left.
(D_{x}^{-1}u)\right\vert _{0}^{2\pi }=0.$

The result stated above allows us to suggest that the dynamical system (\ref%
{D2.1}) on the functional manifold $M$ is an integrable Hamiltonian system.

First, we will show that this dynamical system is a Hamiltonian flow
\begin{equation}
du/dt=-\vartheta \text{ }{grad}\ H[u]  \label{D2.11}
\end{equation}%
with respect to some Poisson structure $\vartheta :T^{\ast }(M)\rightarrow
T(M)$ and a Hamiltonian function $H\in \mathcal{D}(M).$ Based on the
standard symplectic techniques \cite{FT,PM,Bl,BPS} consider the conservation
law \ (\ref{D2.10}) and present it in the scalar "momentum" form:%
\begin{equation}
-1/2\gamma _{5}=\frac{1}{2}\int_{0}^{2\pi
}u_{x}D_{x}^{-1}udx=(1/2D_{x}^{-1}u,u_{x}):=(\psi ,u_{x})  \label{D2.12}
\end{equation}%
with the co-vector $\psi :=1/2D_{x}^{-1}u\in T^{\ast }(M)$ and calculate the
corresponding co-Poissonian structure%
\begin{equation}
\vartheta ^{-1}:=\psi ^{\prime }-\psi ^{\prime ,\ast }=D_{x}^{-1},
\label{D2.13}
\end{equation}%
or the Poissonian structure
\begin{equation}
\vartheta =D_{x}.  \label{D2.13a}
\end{equation}%
The obtained operator $\vartheta =D_{x}:T^{\ast }(M)\rightarrow T(M)$ is
really Poissonian for (\ref{D2.1}) since the following determining
symplectic condition
\begin{equation}
\psi _{t}+K^{^{\prime ,\ast }}\psi ={grad}\ \mathcal{L}  \label{D2.14}
\end{equation}%
holds for the Lagrangian function%
\begin{equation}
\mathcal{L=}\frac{1}{12}\int_{0}^{2\pi }u^{3}dx.  \label{D2.15}
\end{equation}%
As a result of (\ref{D2.14}) one obtains easily that
\begin{equation}
du/dt=-\vartheta {grad}\ H[u],  \label{D2.16}
\end{equation}%
where the Hamiltonian function
\begin{equation}
H=(\psi ,K)-\mathcal{L=}\frac{1}{2}\int_{0}^{2\pi
}[u^{3}/3-(D_{x}^{-1}u)^{2}/2]dx  \label{D2.17}
\end{equation}%
is an additional conservation law of the dynamical system (\ref{D2.1}).
Thus, one can formulate the following proposition.

\begin{proposition}
\label{Prop_D2.1} The Ostrovsky-Vakhnenko dynamical system (\ref{D2.1})
possesses an infinite hierarchy of nonlocal, in general, conservation laws (%
\ref{D2.9}) and is a Hamiltonian flow (\ref{D2.16}) on the manifold $M$ with
respect to the Poissonian structure (\ref{D2.13a}).
\end{proposition}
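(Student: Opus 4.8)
The plan is to establish the two assertions of Proposition~\ref{Prop_D2.1} separately, since the infinite hierarchy of conservation laws and the Hamiltonian structure are proved by essentially independent arguments, both of which have already been prepared in the preceding discussion. For the conservation-law claim, I would argue that the existence of the asymptotic solution~\eqref{D2.4} to the Lax gradient equation~\eqref{D2.3} automatically generates the conserved quantities~\eqref{D2.9}. The key point is that the recurrence~\eqref{D2.7} is solvable order by order in $\lambda^{-1}$, producing well-defined local densities $\sigma_{j}[u]$ as exhibited in~\eqref{D2.8}, and that the zeroth-order coefficient $\sigma(x;\lambda)$ of $\varphi_{x}/\varphi$ satisfies a local conservation law in the sense $\partial_{t}\sigma = \partial_{x}(\text{something local})$. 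Integrating over the period and using the $2\pi$-periodicity together with the constraint $\int_{0}^{2\pi}u\,dx=0$ from~\eqref{D2.2} kills the boundary terms, exactly as illustrated in the model computation~\eqref{D2.10a} for $\gamma_{5}$. Thus I would verify that each $\gamma_{j}$ is time-independent by differentiating under the integral sign, substituting the dynamics~\eqref{D2.1}, and reducing the integrand to a total $x$-derivative whose periodic integral vanishes.

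For the Hamiltonian claim, the strategy is to confirm that the operator $\vartheta=D_{x}$ in~\eqref{D2.13a} is a genuine Poisson structure and that the flow~\eqref{D2.1} is generated by it through the Hamiltonian~\eqref{D2.17}. First I would check that $\vartheta=D_{x}:T^{\ast}(M)\rightarrow T(M)$ is skew-symmetric with respect to the convolution pairing and satisfies the Jacobi identity; for the constant-coefficient operator $D_{x}$ this is standard and the Jacobi identity holds trivially since $\vartheta$ is independent of $u$. The substantive step is to verify the determining symplectic condition~\eqref{D2.14}: I would compute $\psi_{t}+K^{\prime,\ast}\psi$ with $\psi=\tfrac{1}{2}D_{x}^{-1}u$ and $K^{\prime,\ast}=uD_{x}+D_{x}^{-1}$ from~\eqref{D2.5}, substitute $u_{t}=K[u]=-uu_{x}-D_{x}^{-1}u$, and show that the result equals $\operatorname{grad}\mathcal{L}$ for the Lagrangian~\eqref{D2.15}. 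Once~\eqref{D2.14} is confirmed, the Hamiltonian identity~\eqref{D2.16} follows from the general symplectic relation $H=(\psi,K)-\mathcal{L}$, and a direct computation yields the explicit form~\eqref{D2.17}.

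The main obstacle I anticipate lies in the careful bookkeeping of the nonlocal terms $D_{x}^{-1}u$ in the verification of~\eqref{D2.14}. Because $\psi$, $K$, and $K^{\prime,\ast}$ all involve the inverse operator $D_{x}^{-1}$, the computation of $\operatorname{grad}\mathcal{L}$ and of $\psi_{t}+K^{\prime,\ast}\psi$ requires repeated integration by parts and consistent use of the zero-mean constraint to discard boundary contributions of the form $\left.(D_{x}^{-1}u)^{k}\right\vert_{0}^{2\pi}$. In particular, one must ensure that $\psi=\tfrac{1}{2}D_{x}^{-1}u$ indeed lands in $T^{\ast}(M)$ and that the gradient of~\eqref{D2.15} is computed in the variational sense $\delta\mathcal{L}/\delta u=\tfrac{1}{4}u^{2}$; matching the nonlocal pieces on both sides of~\eqref{D2.14} is where the calculation is most delicate. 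By contrast, the skew-symmetry and Jacobi property of $\vartheta=D_{x}$, and the vanishing of the boundary terms in the conservation-law argument, are routine given the periodicity and the constraint~\eqref{D2.2}.
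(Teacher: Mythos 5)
Your proposal is correct and follows essentially the same route as the paper: the conservation laws come from the order-by-order solvability of the recurrence \eqref{D2.7} obtained by inserting the asymptotic ansatz \eqref{D2.4} into the Lax gradient equation \eqref{D2.3}, with periodicity and the zero-mean constraint \eqref{D2.2} killing boundary terms exactly as in the model computation \eqref{D2.10a}; the Hamiltonian claim is settled by verifying the determining symplectic condition \eqref{D2.14} for $\psi=\tfrac{1}{2}D_{x}^{-1}u$, which yields $\vartheta=D_{x}$ and $H=(\psi,K)-\mathcal{L}$ as in \eqref{D2.17}. Your explicit check of skew-symmetry and the Jacobi identity for the constant-coefficient operator $D_{x}$ is a small addition the paper leaves implicit, but it does not change the argument.
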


\begin{remark}
It is useful to remark here that the existence of an infinite ordered by $%
\lambda $-powers hierarchy of conservations laws (\ref{D2.9}) is a typical
property \cite{FT,Bl,PM,BPS} of the Lax type integrable Hamiltonian systems,
which are simultaneously bi-Hamiltonian flows with respect to corresponding
two compatible Poissonian structures.
\end{remark}

As is well known \cite{FT,Bl,PM,BPS}, the second Poissonian structure $\eta
:T^{\ast }(M)\rightarrow T(M)$ on the manifold $M$ for (\ref{D2.1}), if it
exists, can be calculated as
\begin{equation}
\eta ^{-1}:=\tilde{\psi}^{\prime }-\tilde{\psi}^{\prime ,\ast },
\label{D2.18}
\end{equation}%
where a-covector $\tilde{\psi}\in T^{\ast }(M)$ is a second solution to the
determining equation (\ref{D2.14}):%
\begin{equation}
\tilde{\psi}_{t}+K^{^{\prime ,\ast }}\tilde{\psi}={grad}\ \mathcal{\tilde{L}}
\label{D2.19}
\end{equation}%
for some Lagrangian functional $\mathcal{\tilde{L}}\in \mathcal{D}(M).$ It
can be certainly done by means of simple enough but cumbersome analytical
calculations based, for example, on the asymptotical small parameter method
\cite{PM,BPS,HPP} and on which we will not stop here.

Instead of this we will shall apply the direct differential-algebraic
approach to dynamical system (\ref{D2.1}) and reveal its Lax type
representation both in the differential scalar and in canonical matrix
Zakharov-Shabat forms. Moreover, we will construct the naturally related
compatible polynomial Poissonian structures for Ostrovsky -Vakhnenko
dynamical system (\ref{D2.1}) and generate an infinite hierarchy of
commuting to each other nonlocal conservation laws.

\section{Lax type representation and compatible Poissonian structures - the
differential-algebraic approach}

We will start with construction of the polynomial differential ring $%
\mathcal{K}\{u\}\subset \mathcal{K}:=\mathbb{R}\{\{x,t\}\}$ generated by a
fixed functional variable $u\in \mathbb{R}\{\{x,t\}\}$ and invariant with
respect to two differentiations $D_{x}:=\partial /\partial x$ and $%
D_{t}:=\partial /\partial t+u\partial /\partial x,$ satisfying the
Lie-algebraic commutator relationship

\begin{equation}
\lbrack D_{x},D_{t}]=u_{x}D_{x}.  \label{D3.1}
\end{equation}%
Since the Lax type representation for the dynamical system (\ref{D2.1}) can
be interpreted \cite{PAPP,BPS} as the existence of a finite-dimensional
invariant differential ideal $\mathcal{I}_{N}\{u\}\subset \mathcal{K}\{u\},\
$realizing the corresponding finite-dimensional representation of the the
Lie-algebraic commutator relationship (\ref{D3.1}), this ideal can be
presented as
\begin{equation}
\mathcal{I}_{N}\{u\}:=\{\sum_{j=\overline{0,N}}g_{j}D_{x}^{j}f[u]\in
\mathcal{K}\{u\}:g_{j}\in \mathcal{K},j=\overline{0,N}\},  \label{D3.2}
\end{equation}%
where an element $f[u]\in \mathcal{K}\{u\}$ and $N\in \mathbb{Z}_{+}$are
fixed. The $D_{x}$-invariance of ideal (\ref{D3.2}) will be \textit{a priori
}evident, if the function $f[u]\in \mathcal{K}\{u\}$ satisfies the linear
differential relationship%
\begin{equation}
D_{x}^{N+1}f=\sum_{k=0}^{N}a_{j}[u]D_{x}^{j}f  \label{D3.2a}
\end{equation}%
for some coefficients $a_{j}[u]\in \mathcal{K}\{u\},$ $j=\overline{0,N},$ \
but its $D_{t}$-invariance strongly depends on the element $f[u]\in \mathcal{%
K}\{u\},$ which can be found from the functional relationship \ (\ref{D2.3})
on the element $\varphi \lbrack u;\lambda ]:={grad}\ \gamma (\lambda )\in
\mathcal{K}\{u\},\gamma (\lambda ):=\int_{0}^{2\pi }\sigma (x;\lambda )dx,$
rewritten in the following form:%
\begin{equation}
D_{x}D_{t}\varphi =-\varphi .  \label{D3.3}
\end{equation}%
From the right hand side one follows that there exists an element $\eta
:=\eta \lbrack u]=-D_{t}\varphi \lbrack u]\in \mathcal{K}\{u\},$ such that
\begin{equation}
\varphi =D_{x}\eta .  \label{D3.3a}
\end{equation}%
Having substituted \ (\ref{D3.3a}) into the left hand side of \ (\ref{D3.3})
one finds easily that%
\begin{equation}
\begin{array}{c}
D_{x}D_{t}\eta -u_{x}\eta _{x}=D_{x}D_{t}\eta -u_{x}\varphi = \\
=D_{x}D_{t}\eta -u_{x}{grad}\ \gamma (\lambda )= \\
=D_{x}(D_{t}\eta -\gamma \lbrack u,\lambda ])\ =\eta ,%
\end{array}
\label{D3.3b}
\end{equation}%
where we have put, by definition, $\ \gamma (\lambda ):=$ $\int_{0}^{2\pi
}\gamma \lbrack u;\lambda ]dx$ for a suitably \ chosen density element $%
\gamma \lbrack u;\lambda ]\in \mathcal{K}\{u\}.$ \ As an evident result of \
(\ref{D3.3b}) one derives that there exists an element $\rho :=\rho \lbrack
u]\in \mathcal{K}\{u\},$ such that
\begin{equation}
\eta =D_{x}\rho .  \label{D3.3c}
\end{equation}%
Turning back to the relationships \ (\ref{D3.3a}) and \ (\ref{D3.3c}) one
obtains that the following differential representation
\begin{equation}
\varphi =D_{x}^{2}\rho  \label{D3.3d}
\end{equation}%
holds.

As a further step, we \ can try to realize \ the differential ideal \ (\ref%
{D3.2}) \ by means of the generating element $f[u]\Longrightarrow \rho
\lbrack u]\in \mathcal{K}\{u\},\ $defined by the relationship \ (\ref{D3.3d}%
). But, as it is easy to check, the obtained this way differential ideal is
not finite-dimensional. So, for a future calculating convenience, we will
represent the element $\rho \lbrack u]\in \mathcal{K}\{u\}$ in the following
natural factorized form:
\begin{equation}
\rho :=\bar{f}f,  \label{D3.4}
\end{equation}%
where elements $f,\bar{f}\in \mathcal{K}\{u\}$ satisfy, by definition, the
adjoint pairs of the following differential relationships:%
\begin{eqnarray}
D_{x}^{N+1}f &=&\sum_{k=0}^{N}a_{j}[u]D_{x}^{j}f,  \label{D3.4a} \\
(-1)^{N+1}D_{x}^{N+1}\bar{f} &=&\sum_{k=0}^{N}(-1)^{j}(D_{x}^{j}a_{j}[u])%
\bar{f},\text{ }  \notag
\end{eqnarray}%
and
\begin{equation}
D_{t}f=\sum_{j=0}^{N-1}b_{j}D_{x}^{j}f,\text{ \ \ }D_{t}\bar{f}=-u_{x}\bar{f}%
+\sum_{j=0}^{N-1}(-1)^{j+1}(D_{x}^{j}b_{j})f,  \label{D3.5}
\end{equation}%
for some elements $b_{j}\in \mathcal{K}\{u\},j=\overline{0,N-1},$ and check
the finite-dimensional $D_{x}$- and $D_{t}$-invariance of the corresponding
ideal \ (\ref{D3.2}), generated by the element $f\in \mathcal{K}\{u\}.$

Now it is easy to check by means of simple enough calculations, based on the
relationship \ (\ref{D3.3}) and (\ref{D3.3d}), that the following
differential equalities
\begin{eqnarray}
D_{x}(D_{t}\varphi ) &=&-\varphi ,\text{ \ \ \ \ \ }D_{x}(D_{t}^{2}\varphi
)=-u_{x}\varphi -D_{t}\varphi ,\   \label{D3.6} \\
D_{x}(D_{t}^{3}\varphi ) &=&u\varphi -2u_{x}D_{t}\varphi -D_{t}^{2}\varphi
,\   \notag \\
D_{x}(D_{t}^{4}\varphi ) &=&-(uu_{x}+D_{x}^{-1}u)\varphi
+(4u_{x}^{2}+3u)D_{t}\varphi -2u_{x}D_{t}^{2}\varphi -D_{t}^{3}\varphi ,...,
\notag
\end{eqnarray}%
and their consequences
\begin{eqnarray}
D_{t}D_{x}^{2}\rho &=&-\rho _{x},\text{ \ \ \ \ \ }D_{x}(D_{t}\rho
_{x})=u_{x}\rho _{xx}-D_{x}\rho ,\   \label{D3.6a} \\
D_{x}^{2}(D_{t}\rho ) &=&D_{x}(u_{x}D_{x}\rho -\rho )+u_{xx}D_{x}^{2}\rho
,...,\   \notag
\end{eqnarray}%
\ hold. \ Taking into account the independence of the sets of functional
elements $\{f,D_{x}f,D_{x}f,..,D_{x}^{N-1}f\}\subset \mathcal{K}\{u\}$ and $%
\{\bar{f},D_{x}\bar{f},D_{x}\bar{f},..,D_{x}^{N-1}\bar{f}\}\subset \mathcal{K%
}\{u\},$ the relationships \ \ (\ref{D3.6a}) jointly with (\ref{D3.4}), \ (%
\ref{D3.4a}) and \ (\ref{D3.5}) make it possible to state the following
lemma.

\begin{lemma}
\label{Lm_D3.1}\bigskip The set \ (\ref{D3.2}) represents a $D_{x}$- and $%
D_{t}$-invariant differential ideal in the ring $\ \mathcal{K}$ for all $%
N\geq 2.$
\end{lemma}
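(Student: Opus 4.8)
The plan is to regard $\mathcal{I}_N\{u\}$ as the $\mathcal{K}$-module spanned by the $N+1$ generators $f, D_x f, \ldots, D_x^N f$ and to check closure under each derivation in turn. For $D_x$-invariance I would simply apply the Leibniz rule to a generic element, obtaining $D_x\big(\sum_{j=0}^N g_j D_x^j f\big)=\sum_j (D_x g_j)D_x^j f+\sum_j g_j D_x^{j+1}f$. Every term of the first sum lies in the ideal since $D_x g_j\in\mathcal{K}$, while in the second sum only the top term $g_N D_x^{N+1}f$ threatens to leave the span; the defining recurrence (\ref{D3.4a}) rewrites $D_x^{N+1}f=\sum_{k=0}^N a_k[u]D_x^k f\in\mathcal{I}_N\{u\}$, so $D_x$-invariance is immediate. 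This is exactly the \emph{a priori} closure already announced after (\ref{D3.2a}).

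The substantive part is $D_t$-invariance, and here I would propagate the single base relation $D_t f=\sum_{j=0}^{N-1} b_j D_x^j f\in\mathcal{I}_N\{u\}$ from (\ref{D3.5}) up through all the generators by induction on $m$, using the commutator (\ref{D3.1}) in the form $D_t D_x=D_x D_t-u_x D_x$. Assuming $D_t(D_x^{m-1}f)\in\mathcal{I}_N\{u\}$ for some $1\le m\le N$, I compute $D_t(D_x^m f)=D_x\big(D_t(D_x^{m-1}f)\big)-u_x D_x^m f$; the first summand remains in the ideal by the inductive hypothesis together with the $D_x$-invariance just established, and the second is $u_x$ times a generator whenever $m\le N$. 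This closes the induction for $0\le m\le N$, and then for a general element $D_t\big(\sum_j g_j D_x^j f\big)=\sum_j (D_t g_j)D_x^j f+\sum_j g_j D_t(D_x^j f)$ both pieces lie in $\mathcal{I}_N\{u\}$, so $D_t$-invariance follows.

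The step I expect to be the real obstacle is not the induction but the \emph{consistency} of the two defining systems (\ref{D3.4a}) and (\ref{D3.5}): for an element $f$ obeying simultaneously the order-$(N+1)$ spatial recurrence and the order-$(N-1)$ temporal relation to exist, the coefficients $a_j[u],b_j[u]\in\mathcal{K}\{u\}$ must be mutually compatible under $[D_x,D_t]=u_x D_x$. I would verify this by unwinding the explicit low-order identities (\ref{D3.6}) for $D_x(D_t^k\varphi)$ and their $\rho$-consequences (\ref{D3.6a}), which re-express each $D_x(D_t^k\varphi)$ in terms of $\varphi,D_t\varphi,\ldots,D_t^{k-1}\varphi$ and thereby determine the $b_j$; matching coefficients against the linearly independent sets $\{f,D_x f,\ldots,D_x^{N-1}f\}$ and $\{\bar f,D_x\bar f,\ldots,D_x^{N-1}\bar f\}$ forces the $D_t$-orbit to close within a finite-dimensional span precisely once $N\ge 2$, which is the threshold recorded in the statement. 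With that compatibility secured, the two invariance arguments above complete the proof.
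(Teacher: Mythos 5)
Your proposal is correct and follows essentially the same route as the paper: the two formal invariance inductions you spell out (Leibniz plus the recurrence (\ref{D3.4a}) for $D_x$, and the commutator $[D_x,D_t]=u_xD_x$ propagating (\ref{D3.5}) up the generators for $D_t$) are treated as evident there, and the decisive step --- that the relations (\ref{D3.6a}) forced by $D_xD_t\varphi=-\varphi$ are compatible with (\ref{D3.4a}) and (\ref{D3.5}) exactly when $N\geq 2$ --- is precisely what the paper's one-line proof asserts. Note that, like the paper, you only describe rather than execute that compatibility computation, so the $N\geq 2$ threshold is correctly located but not independently verified in either argument.
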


\begin{proof}
This result easily follows from the fact that for number $N\geq 2$ all of
the relationships \ (\ref{D3.6a}) persist to be compatible upon taking into
account the differential expressions (\ref{D3.4}) and \ (\ref{D3.5}).
Contrary to that, at $N=1$ they become not compatible.
\end{proof}

As a corollary of Lemma \ \ref{Lm_D3.1}, having put in \ (\ref{D3.2}) and \ (%
\ref{D3.5}) the number $N=2,$ one finds easily by means of elementary enough
calculations that the related differential ideal $\mathcal{I}_{2}\{u\}$
lasts to be invariant, if the differential Lax type relationships
\begin{equation}
D_{x}^{3}f=-\mu \bar{u}f,\text{ }D_{x}^{3}\bar{f}=\mu \bar{u}\bar{f},
\label{D3.7}
\end{equation}%
and
\begin{equation}
D_{t}f=\mu ^{-1}D_{x}^{2}f+u_{x}f,\text{ }D_{t}\bar{f}=-\mu ^{-1}D_{x}^{2}%
\bar{f}-2u_{x}\bar{f},  \label{D3.8}
\end{equation}%
where $\bar{u}:=u_{xx}+1/3,\mu \in \mathbb{C}\backslash \{0\}$ is an
arbitrary complex parameter, hold. Moreover, they exactly coincide with
those found before in \cite{DHH}. \ The obtained above differential
relationships \ (\ref{D3.7}) and \ (\ref{D3.8}) can be equivalently
rewritten in the following matrix Zakharov-Shabat type form:
\begin{equation}
D_{t}h=\hat{q}[u;\mu ]h,\text{ }D_{x}h=\hat{l}[u;\mu ]h,  \label{D3.8a}
\end{equation}%
where matrices

\begin{equation}
\hat{q}[u;\mu ]:=\left(
\begin{array}{ccc}
u_{x} & 0 & 1/\mu \\
-1/3 & 0 & 0 \\
0 & -1/3 & -u_{x}%
\end{array}%
\right) ,\text{ \ \ }\hat{l}[u;\mu ]:=\left(
\begin{array}{ccc}
0 & 1 & 0 \\
0 & 0 & 1 \\
-\mu \bar{u} & 0 & 0%
\end{array}%
\right)  \label{D3.8b}
\end{equation}%
and $h:=(f,D_{x}f,D_{x}^{2}f)^{\intercal }\in \mathcal{K}\{u\}^{3}.$

\ Based further on the obtained differential relationships \ (\ref{D3.7})
and \ (\ref{D3.8}), one obtains that the compatibility condition (\ref{D3.3}%
) gives rise to the following important relationship%
\begin{equation}
-\vartheta \varphi =D_{x}^{2}D_{t}\varphi =3\mu ^{2}\eta \varphi ,
\label{D3.9}
\end{equation}%
where \ the polynomial integro-differential operator%
\begin{equation}
\eta :=\partial ^{-1}\bar{u}\partial ^{-3}\bar{u}\partial ^{-1}+4\partial
^{-2}\bar{u}\partial ^{-1}\bar{u}\partial ^{-2}+2(\partial ^{-2}\bar{u}%
\partial ^{-2}\bar{u}\partial ^{-1}+\partial ^{-1}\bar{u}\partial ^{-2}\bar{u%
}\partial ^{-2})  \label{D3.10}
\end{equation}%
is skewsymmetric on the functional manifold $M$ and presents the second
compatible Poisson structure for the Ostrovsky-Vakhnenko dynamical system (%
\ref{D2.1}).

Based now on the recurrent relationships following from substitution of the
asymptotic expansion%
\begin{equation}
\varphi \simeq \sum_{j\in \mathbb{Z}_{+}}\varphi _{j}\xi ^{-j},\text{ }\xi
:=-1/(3\mu ^{2}),  \label{D3.13}
\end{equation}%
into (\ref{D3.9}), one can determine a new infinite hierarchy of
conservations laws for dynamical system (\ref{D2.1}):%
\begin{equation}
\tilde{\gamma}_{j}:=\int_{0}^{1}ds(\varphi _{j}[us],u),  \label{D3.14}
\end{equation}%
for $j\in \mathbb{Z}_{+},$ where
\begin{equation}
\varphi _{j}=\Lambda ^{j}\varphi _{0},\text{ }\vartheta \varphi _{0}=0,
\label{D3.15}
\end{equation}%
and the recursion operator $\Lambda :=\vartheta ^{-1}\eta :T^{\ast
}(M)\rightarrow T^{\ast }(M)$ satisfies the standard Lax type representation:%
\begin{equation}
\Lambda _{t}=[\Lambda ,K^{^{\prime }\ast }].  \label{D3.16}
\end{equation}%
The obtained above results can be formulated as follows.

\begin{proposition}
\label{Prop_D3.2} The Ostrovsky-Vakhnenko dynamical system (\ref{D2.1})
allows the standard differential Lax type representation (\ref{D3.7}), (\ref%
{D3.8}) and defines on the functional manifold $M$ an integrable
bi-Hamiltonian flow with compatible Poisson structures (\ref{D2.13a}) and (%
\ref{D3.10}). In particular, this dynamical system possesses an infinite
hierarchy of nonlocal conservation laws (\ref{D3.14}), defined by the
gradient elements (\ref{D3.15}).
\end{proposition}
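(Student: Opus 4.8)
The plan is to assemble the proposition from three essentially independent components, each of which has already been set up in the preceding development. The first component, the differential Lax type representation \eqref{D3.7}, \eqref{D3.8}, is the direct payoff of Lemma \ref{Lm_D3.1} specialized to $N=2$: I would verify that the ansatz \eqref{D3.7}--\eqref{D3.8} is precisely what makes the ideal $\mathcal{I}_{2}\{u\}$ simultaneously $D_{x}$- and $D_{t}$-invariant. Concretely, one imposes the compatibility (zero-curvature) condition $[D_{t},D_{x}]=u_{x}D_{x}$ from \eqref{D3.1} on the matrix pair $\hat{q}[u;\mu],\hat{l}[u;\mu]$ of \eqref{D3.8b} and checks that it reproduces exactly the Ostrovsky--Vakhnenko flow \eqref{D2.1} with $\bar{u}=u_{xx}+1/3$. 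This is a finite matrix calculation and should present no obstacle.

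Next I would establish the bi-Hamiltonian structure. The first Poisson operator $\vartheta=D_{x}$ is already certified Poissonian by Proposition \ref{Prop_D2.1}. For the second operator $\eta$ of \eqref{D3.10}, the key is the identity \eqref{D3.9}, namely $-\vartheta\varphi=D_{x}^{2}D_{t}\varphi=3\mu^{2}\eta\varphi$, which exhibits $\eta$ as the operator intertwining the spectral flow with the first structure. From this relationship I would read off that $\eta$ is skew-symmetric on $M$; its explicit factorized form in terms of $\partial^{-1}$ and $\bar{u}$ makes the skew-symmetry checkable by integration by parts, using the constraint $\int_{0}^{2\pi}u\,dx=0$ from \eqref{D2.2} to kill boundary terms. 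Compatibility of the pair $(\vartheta,\eta)$ — that every pencil $\vartheta+c\,\eta$ remains Poissonian — then follows because both structures arise as solutions of the \emph{same} determining symplectic equation \eqref{D2.14}, \eqref{D2.19}, so their Schouten bracket vanishes by the general gradient-holonomic theory cited in the remark after Proposition \ref{Prop_D2.1}.

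Finally, for the infinite hierarchy of conservation laws I would invoke the recursion operator $\Lambda=\vartheta^{-1}\eta$ and its Lax representation \eqref{D3.16}, $\Lambda_{t}=[\Lambda,K^{\prime\ast}]$. The point is that \eqref{D3.16} guarantees that applying $\Lambda$ to any gradient of a conservation law produces another such gradient; iterating from the seed $\varphi_{0}$ with $\vartheta\varphi_{0}=0$ via \eqref{D3.15} generates the $\varphi_{j}=\Lambda^{j}\varphi_{0}$, and the homotopy formula \eqref{D3.14}, $\tilde{\gamma}_{j}=\int_{0}^{1}ds\,(\varphi_{j}[us],u)$, reconstructs the corresponding functionals. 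Their conservation follows by the standard argument that $\varphi_{j}$ solves the Lax gradient equation \eqref{D2.3} order by order in the expansion \eqref{D3.13}.

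I expect the main obstacle to lie in rigorously justifying the \emph{compatibility} of the two Poisson structures and the well-definedness of $\Lambda=\vartheta^{-1}\eta$ as an operator on $T^{\ast}(M)$. Because $\eta$ in \eqref{D3.10} is a genuinely nonlocal integro-differential operator, one must argue that $\vartheta^{-1}\eta$ maps the local/nonlocal functional gradients back into an admissible class closed under iteration, and that no spurious secular or non-integrable terms appear. Verifying the vanishing of the Jacobi identity (Schouten bracket) for the pencil, rather than merely the skew-symmetry of each piece, is the delicate part; here I would lean on the fact that both $\vartheta$ and $\eta$ descend from the common variational equation \eqref{D2.14}, which structurally forces compatibility, so that the hard analytic verification can be replaced by this conceptual argument.
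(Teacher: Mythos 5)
Your overall decomposition---the Lax representation from the $N=2$ specialization of Lemma \ref{Lm_D3.1}, the second Poisson operator read off from the identity (\ref{D3.9}), and the hierarchy generated by the recursion operator $\Lambda=\vartheta^{-1}\eta$ via the expansion (\ref{D3.13})---is exactly how the paper assembles this proposition, which it states without a separate proof as a summary of the preceding derivation. The first and third components of your plan are sound and match the text, and you are right to single out the Jacobi identity for the pencil and the well-definedness of $\Lambda$ on nonlocal gradients as the genuinely delicate points, which the paper itself glosses over.

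The gap is in your compatibility argument. You propose to deduce that $\vartheta$ and $\eta$ are compatible because ``both structures arise as solutions of the same determining symplectic equation (\ref{D2.14}), (\ref{D2.19}).'' That premise is precisely what is \emph{not} established: the paper explicitly declines to solve (\ref{D2.19}) for a second covector $\tilde{\psi}$, calling that computation cumbersome, and instead obtains $\eta$ from the differential-algebraic identity (\ref{D3.9}). Nothing in your proposal verifies that the inverse of the nonlocal operator (\ref{D3.10}) has the form $\tilde{\psi}^{\prime}-\tilde{\psi}^{\prime,\ast}$ for some solution $\tilde{\psi}$ of (\ref{D2.19}), so the gradient-holonomic compatibility theorem you want to invoke does not apply as stated. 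The argument actually available from the text is different: (\ref{D3.9}) exhibits the gradient $\varphi=\mathrm{grad}\,\gamma(\lambda)$ as an element annihilated by the pencil $\vartheta+3\mu^{2}\eta$ for every value of $\mu$, and it is this one-parameter family of common kernel elements, together with the Lax equation (\ref{D3.16}) for $\Lambda$, that the gradient-holonomic scheme uses to certify compatibility and the involutivity of the functionals (\ref{D3.14}). To make your step rigorous you must either carry out the ``cumbersome'' verification that $\eta^{-1}$ indeed solves (\ref{D2.18})--(\ref{D2.19}), or check the vanishing of the Schouten bracket of $\vartheta$ and $\eta$ directly (using the periodicity constraint (\ref{D2.2}) to control the $\partial^{-1}$ factors); your conceptual shortcut, as written, does not substitute for either.
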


\begin{remark}
It is useful to remark here that the existence of an infinite \ $\lambda $%
-powers ordered hierarchy of conservations laws (\ref{D2.9}) is a typical
property \cite{FT,Bl,PM,BPS} of the Lax type integrable Hamiltonian systems,
which are simultaneously bi-Hamiltonian flows with respect to corresponding
compatible Poissonian structures.
\end{remark}

\begin{remark}
It is interesting to observe that our second polynomial Poisson structure (%
\ref{D3.10}) differs from that obtained recently in \cite{BS}, which
contains the rational power factors.
\end{remark}

It is easy to construct making use of the differential expressions (\ref%
{D3.7}) and (\ref{D3.8}) a slightly different from \ (\ref{D3.8a}) matrix
Lax type representation of the Zakharov-Shabat form for the dynamical system
(\ref{D1.1}).

Really, if to define the "spectral" parameter $\mu :=1/(9\lambda )\in $ $%
\mathbb{C}\backslash \{0\}$ and new basis elements of the invariant
differential ideal \ (\ref{D3.2}):%
\begin{equation}
g_{1}:=-3D_{x}f,\text{ }g_{2}:=f,\text{ }g_{3}:=9\lambda D_{x}^{2}f+u_{x}f,
\label{D3.17}
\end{equation}%
then relationships (\ref{D3.7}) and (\ref{D3.8}) can be rewritten as follows:%
\begin{equation}
D_{t}g=q[u;\lambda ]g,\text{ }D_{x}g=l[u;\lambda ]g,  \label{D3.18}
\end{equation}%
where matrices%
\begin{equation}
q[u;\lambda ]:=\left(
\begin{array}{ccc}
0 & 1 & 0 \\
0 & 0 & 1 \\
\lambda & -u & 0%
\end{array}%
\right) ,l[u;\lambda ]:=\left(
\begin{array}{ccc}
0 & u_{x}/(3\lambda ) & -1/(3\lambda ) \\
-1/3 & 0 & 0 \\
-u_{x}/3 & -1/3 & 0%
\end{array}%
\right)  \label{D3.19}
\end{equation}%
coincide with those of \cite{DHH,BS} and satisfy the following
Zakharov-Shabat type compatibility condition:
\begin{equation}
D_{t}l=[q,l]+D_{x}q-l\text{ }D_{x}u.  \label{D3.20}
\end{equation}

\begin{remark}
As it was already mentioned above, the Lax type representation \ (\ref{D3.19}%
) of the  Ostrovsky-Vakhnenko dynamical system \ (\ref{D1.1}) was obtained
in  \cite{DHH} by means of a suitable limiting reduction of the
Degasperis-Processi equation
\begin{equation}
u_{t}-u_{xxt}+4uu_{x}-3u_{x}u_{xx}-uu_{xxx}=0.  \label{D3.21}
\end{equation}%
For convenience, let us rewrite the latter in the following form:%
\begin{equation}
D_{t}z=-3zD_{x}u,\text{ }\ z=u-D_{x}^{2}u,  \label{D3.22}
\end{equation}%
where differentiations $D_{t}:=\partial /\partial t+u\partial /\partial x$
and $D_{x}:=\partial /\partial x$ \ satisfy the Lie- algebraic relationship
\ (\ref{D3.1}). It appears to be very impressive that equation \ (\ref{D3.21}%
) is itself a  special reduction of a new  Lax type integrable Riemann type
hydrodynamic system,  proposed and studied (at $s=2)$ recently  in  \cite%
{BPAP}:%
\begin{equation}
D_{t}^{N-1}u=\bar{z}_{x}^{s}\ ,\text{ \ \ }D_{t}\bar{z}=0,  \label{D3.23}
\end{equation}
where $s,$ $N\in \mathbb{N}$ are arbitrary natural numbers.  Really, having
put, by definition,  $z:=\bar{z}_{x}^{s}$ and $s=3,$ from \ (\ref{D3.23})
one easily obtains the following dynamical system:%
\begin{equation}
\begin{array}{c}
D_{t}^{N-1}u=z, \\
D_{t}z=-3zD_{x}u,%
\end{array}
\label{D3.24}
\end{equation}%
coinciding with the   Degasperis-Processi equation \ (\ref{D3.22}) if to
make the identification  $z=u-D_{x}^{2}u.$ As a result, we have stated that
a function $u\in C^{\infty }(\mathbb{R}^{2};\mathbb{R})$, satisfying for an
arbitrary $N\in \mathbb{N}$ the generalized Riemann type hydrodynamical
equation
\begin{equation}
D_{t}^{N-1}u=u-D_{x}^{2}u,  \label{D3.25}
\end{equation}%
simultaneously solves the Degasperis-Processi equation \ (\ref{D3.21}). In
particular, having put $N=2,$ we obtain that solutions to the Burgers type
equation
\begin{equation}
D_{t}u=u-D_{x}^{2}u  \label{D3.26}
\end{equation}%
are solving also the Degasperis-Processi equation \ (\ref{D3.21}). It means,
in particular, that the reduction procedure of the work  \cite{DHH} can be
also applied to the Lax type integrable Riemann type hydrodynamic system \ (%
\ref{D3.23}), giving rise to a related Lax type representation for the
Ostrovsky-Vakhnenko dynamical system \ (\ref{D1.1}).
\end{remark}

\bigskip

\section{Conclusion}

We have showed that the Ostrovsky-Vakhnenko dynamical system is naturally
embedded into the general Lax type integrability scheme \cite{FT,Bl,PM,BPS},
whose main ingredients such as the corresponding compatible Poissonian
structures and Lax type representation can be effectively enough retrieved
by means of direct modern integrability tools, such as the
differential-geometric, differential-algebraic and symplectic gradient
holonomic approaches. We have also demonstrated the relationship of the Ostrovsky-Vakhnenko equation \ref{D1.1} with a generalize Riemann type hydrodynamic system, studied recently in \cite{BPAP} and its reduction. 

\section{Acknowledgements}

Author acknowledges the Scientific and Technological Research Council of
Turkey (TUBITAK/NASU-111T558 Project) for a partial support of his research.

\end{document}